\documentclass[12pt,a4paper]{article}
\usepackage[text={16.5cm,23cm},centering]{geometry}
\usepackage{amsmath}
\usepackage{amsthm}
\usepackage[utf8]{inputenc}
\usepackage{tgtermes,tgheros,tgcursor}
\usepackage[varg]{newtxmath}
\usepackage{hyperref}
\hypersetup{%
 colorlinks=true,%
 linkcolor=blue,
 citecolor=blue,
}

\numberwithin{equation}{section}

\newcommand{\pdv}[1]{\frac{\partial}{\partial #1}}
\newcommand{\del}{\partial}
\newcommand{\Ncal}{\mathcal{N}}
\newcommand{\Zb}{\mathbb{Z}}

\newcommand{\Qd}{\mathcal{Q}}
\newcommand{\Dd}{\mathcal{D}}
\newcommand{\Lcal}{\mathcal{L}}
\newcommand{\Ls}{\mathscr{L}}
\newcommand{\Jq}{\Psi}

\newcommand{\susyder}{\mathbb{Q}}
\newcommand{\Pxyz}{P_{xyz}}
\newcommand{\bt}{\tilde{b}}
\newcommand{\Bt}{\widetilde{B}}
\newcommand{\phit}{\tilde{\phi}}
\newcommand{\Phit}{\widetilde{\Phi}}
\newcommand{\mut}{\tilde{\mu}}
\newcommand{\wt}{\tilde{w}}

\newcommand{\kv}{\vec{k}}
\newcommand{\xv}{\vec{x}}
\newcommand{\zv}{\vec{0}}
\newcommand{\nv}{\vec{n}}

\newtheorem{theorem}{Theorem}

\begin{document}
\begin{center}
  \begin{flushright}
    OU-HET 1090
  \end{flushright}
  \vspace{8ex}
  {\Large \bfseries \boldmath Supersymmetric quantum field theory with exotic symmetry in $3+1$ dimensions and fermionic fracton phases}\\
  \vspace{4ex}
  {\Large Satoshi Yamaguchi}\\
  \vspace{2ex}
  {\itshape Department of Physics, Graduate School of Science, 
  \\
  Osaka University, Toyonaka, Osaka 560-0043, Japan}\\
  \vspace{1ex}
  \texttt{yamaguch@het.phys.sci.osaka-u.ac.jp}\\
  \begin{abstract}
    We propose a supersymmetric quantum field theory with exotic symmetry related to fracton phases.  We use superfield formalism and write down the action of a supersymmetric version of the $\varphi$ theory in $3+1$ dimensions.  It contains a large number of ground states due to the fermionic higher pole subsystem symmetry.  Its residual entropy is proportional to the area instead of the volume.  This theory has a self-duality similar to that of the $\varphi$ theory.  We also write down the action of a supersymmetric version of a tensor gauge theory, and discuss BPS fractons.
  \end{abstract}
\end{center}

\vspace{4ex}
\section{Introduction and summary}
Recently, fracton phases are attracting a lot of attention. For review, refer to \cite{Nandkishore:2018sel,Pretko:2020cko} and references therein.
Such a system shows exotic properties, such as sub-extensive entropy, local particle-like excitation with restricted mobility, and so on.
There are two approaches to fracton phases: solvable lattice models and continuum description.  In this paper, we investigate continuum description.

One of the most important developments in this topic is the discovery of higher pole subsystem symmetries.  
It has been pointed out by \cite{Pretko:2016kxt,Pretko:2016lgv} that higher pole symmetries play a key role in the exotic properties of fracton phases. 
Higher pole symmetries have been further investigated by a lot of papers including \cite{2018PhRvB..98c5111M,Bulmash:2018lid,You:2019cvs,You:2019bvu,Seiberg:2019vrp,Seiberg:2020bhn,Seiberg:2020wsg,Seiberg:2020cxy,Gorantla:2020xap}.
Higher pole symmetries are also important in this paper.

One open question raised in \cite{Pretko:2020cko} is whether there are intrinsically fermionic fracton phases.  We want to approach this question from the perspective of continuum description.  However, we do not have nice guiding principles, since fracton systems do not have the Lorentz symmetry nor continuous spacial rotational symmetry.  Therefore, in this paper, we propose supersymmetry that relates bosons and fermions. We employ this supersymmetry as a guiding principle to introduce fermions to the system.  We expect that the fermionic part of this supersymmetric theory is a natural fermionic fracton system.

In this paper, we first propose a supersymmetrization of the $\varphi$ theory\cite{You:2019cvs,Gorantla:2020xap} in $3+1$ dimensions.  
Let us explain this $\varphi$ theory in our notation.  Let $t$ and $\xv=(x,y,z)$ be time and space coordinates, respectively.  
We impose a periodic boundary conditions in the $x,y,z$ directions for all fields throughout this paper.  
It is convenient to introduce the differential operators
\begin{align}
  \del_{\pm}=\frac12(\del_{t}\pm \alpha \del_{x}\del_{y}\del_{z}),\label{delpm}
\end{align}
where $\alpha$ is a real positive constant.
The $\varphi$ theory includes a single scalar field $\phi$ with the periodicity $\phi\sim \phi+2\pi$. The Lagrangian density of the $\varphi$ theory is given by
\begin{align}
  \Lcal_{\varphi}=2\mu_0\del_{-}\phi \del_{+}\phi
  =\frac{\mu_0}{2}[(\del_{t}\phi)^2-\alpha^2 (\del_{x}\del_{y}\del_{z} \phi)^2],\label{phitheory}
\end{align}
where $\mu_0$ is a real positive constant.
This theory has momentum and winding quadrupole symmetry and exhibits a self-duality as shown in \cite{Gorantla:2020xap}.  

The strategy for supersymmetrization of the $\varphi$ theory in this paper is as follows.  
The Lagrangian density \eqref{phitheory} resembles to that of $(1+1)$-dimensional massless free scalar theory in which $\del_{\pm}$ are replaced by the derivatives in light-cone directions.
Therefore we try to supersymmetrize the $\varphi$ theory in the same way as $1+1$ dimensions as if $\del_{\pm}$ were the derivatives in the light-cone directions.  
We employ the superfield formalism.  
In this paper, we only consider an analogue of $\Ncal=(1,1)$ supersymmetry in $1+1$ dimensions.

Here is the short summary of the results of this paper.  Our strategy works for some cases including the supersymmetric version of the $\varphi$ theory.  We find that the supersymmetric $\varphi$ theory is obtained by adding ``the $\psi$ theory''
\begin{align}
  &\Lcal_{\psi}
  =\mu_0(i\psi_{+}\del_{-}\psi_{+}+i\psi_{-}\del_{+}\psi_{-})
  =\frac{\mu_0}{2}[i\psi_{+}(\del_{t}- \alpha \del_{x}\del_{y}\del_{z})\psi_{+}+i\psi_{-}(\del_{t}+\alpha \del_{x}\del_{y}\del_{z})\psi_{-}],\nonumber\\
  &\qquad
  (\psi_{\pm}:\text{ real fermionic fields.})\label{psitheory}
\end{align}
to the $\varphi$ theory \eqref{phitheory}.  Therefore we claim that this $\psi$ theory is a natural intrinsically fermionic fracton system.

In the supersymmetric $\varphi$ theory, $\psi_{\pm}$ are the superpartners of the quadrupole symmetry currents.  Therefore they satisfy the conservation laws
\begin{align}
  (\del_{t}\mp\alpha \del_{x}\del_{y}\del_{z})\psi_{\pm}=0,
\end{align}
which are nothing but the equations of motion derived from \eqref{psitheory}.  We find that fermionic quadrupole charges
\begin{align}
  \oint dx^{i}\psi_{\pm},\quad (x^{i}=x,y,z)
\end{align}
are conserved by the same argument as usual higher pole symmetries.  Due to these fermionic charges, the residual entropy or log of the ground state degeneracy is proportional to the area of the system instead of the volume.

We also show that our supersymmetric $\varphi$ theory exhibits the self-duality, as the $\varphi$ theory does\cite{Gorantla:2020xap}.  We use our superfield formalism to show this self-duality.

We also formulate the supersymmetric tensor gauge theory that appears in gauging the global part of quadrupole symmetry of the supersymmetric $\varphi$ theory.  This is the supersymmetric version of the tensor gauge theory considered in \cite{You:2019cvs,You:2019bvu,Gorantla:2020xap}.  The multiplet of the gauge field includes two real fermions $\lambda_{\pm}$ and a real scalar $\sigma$ in addition to the tensor gauge field $B_{0},\ B_{xyz}$.  

There are many future issues that may lead to some intriguing results.  One issue is supersymmetry as a subsystem symmetry.  Unfortunately, our supersymmetry in this paper is only global supersymmetry.  It would be very interesting if one finds subsystem supersymmetry by improving our result.

Another issue is a lattice fermionic system described by the $\psi$ theory \eqref{psitheory} in the low energy limit.  
Lattice supersymmetry is a very difficult problem, but it may be feasible to find a lattice realization of the fermionic part.  We make an attempt in appendix \ref{app:lattice}.  
There have been a few studies of fermionic fracton phases from lattice models\cite{you2019building,Tantivasadakarn:2020lhq,Shirley:2020ass}.

It is also important to investigate the physical properties of the fermionic system we obtain in this paper. The conductivity, the specific heat, and so on may have some special behavior due to the fermionic nature of this system.  Our fermionic system is gapless, and therefore it will appear some critical system.  It may be useful to discuss critical exponents in our fermionic system.

There are several other issues. One is constructing interacting supersymmetric theory, which is not possible in the formulation in this paper.  
It will be also interesting to construct theories with extended supersymmetry, for example, $\Ncal=(2,2)$ and $\Ncal=(2,0)$. 
Finally, the analysis of fractons in our supersymmetric system is also a big issue.

The construction of this paper is as follows.  In section \ref{sec:susyphitheory}, we use superfields to formulate the supersymmetric $\varphi$ theory.  
We discuss how to write down the supersymmetric action.
We also show that this supersymmetric $\varphi$ theory has a self-duality.  In section \ref{sec:gaugetheory}, we consider supersymmetric tensor gauge theory.
We write down the action by using the superfield formalism.  We discuss BPS defects as fractons.
In appendix \ref{app:lattice}, we give an fermionic lattice model.
We count the ground state degeneracy and show the residual entropy is proportional to the area instead of the volume.

\section{Superfields and supersymmetric $\varphi$ theory}
\label{sec:susyphitheory}

In this section, we introduce superspace and superfields in order to write down the supersymmetric action.  In particular, we write down the supersymmetric $\varphi$ theory.  This theory has fermionic quadrupole symmetry in addition to bosonic quadrupole symmetry of the $\varphi$ theory.  We show this supersymmetric $\varphi$ theory has a self-duality.

\subsection{Superspace and superfields}

The formulation is almost parallel to $(1+1)$-dimensional $\Ncal=(1,1)$ supersymmetry.  
For supersymmetry in $(1+1)$-dimensions, refer to \cite{Hori:2003ic}, for exmaple.  
The reference \cite{Gates:1983nr} on $(2+1)$-dimensional supersymmetry is also useful since $(1+1)$-dimensional $\Ncal=(1,1)$ superspace is obtained by the dimensional reduction from $(2+1)$-dimensional $\Ncal=1$ superspace.

We introduce real fermionic coordinates $\theta^{+},\theta^{-}$ in addition to the spacetime coordinates $t,\xv=(x,y,z)$. Then we define the following derivatives in order to describe the supersymmetry transformation.
\begin{align}
  \Qd_{\pm} =-i \pdv{\theta^{\pm}} + 2 \theta^{\pm} \del_{\pm},
  \qquad
  \Dd_{\pm} =-i \pdv{\theta^{\pm}} - 2 \theta^{\pm} \del_{\pm},
\end{align}
where $\del_{\pm}=\frac12(\del_{t}\pm \alpha\del_{x}\del_{y}\del_{z})$ are differential operators of \eqref{delpm}. 
Then the anti-commutation relation between them are given by
\begin{equation}
  \begin{aligned}
    &\{\Qd_{\pm},\Qd_{\pm}\}=-4i \del_{\pm},\qquad \{\Qd_{+},\Qd_{-}\}=0,\\
    &\{\Dd_{\pm},\Dd_{\pm}\}=4i \del_{\pm},\qquad \{\Dd_{+},\Dd_{-}\}=0,\qquad \{\Qd_{\alpha},\Dd_{\beta}\}=0,\quad (\alpha,\beta = \pm).
  \end{aligned}    
\end{equation}
The last anti-commutation relation is important when we write down the action.

A real superfield is written as
\begin{align}
  \Phi(t,\xv,\theta^{+},\theta^{-})
  =\phi(t,\xv)
  +i\theta^{+}\psi_{+}(t,\xv)
  +i\theta^{-}\psi_{-}(t,\xv)
  +i\theta^{+}\theta^{-}f(t,\xv).
\end{align}
$\phi,\psi_{\pm},f$ are fields in the spacetime called ``components.'' If $\Phi$ is bosonic, $\phi$ and $f$ are real bosonic fields and $\psi_{\pm}$ are real fermionic fields.  On the other hand, if $\Phi$ is fermionic, $\phi$ and $if$ are real fermionic fields and $i\psi_{\pm}$ are real bosonic fields.

We define the supersymmetry transformation by
\begin{align}
  \delta \Phi = \susyder \Phi := (i\epsilon_{-}\Qd_{+}-i\epsilon_{+}\Qd_{-})\Phi,
  \label{susytransf}
\end{align}
where $\epsilon_{\pm}$ are infinitesimal fermionic parameters of the transformation.  In terms of components, the supersymmetry transformation is written as
\begin{equation}
\begin{aligned}
  &\delta \phi =i\epsilon_{-}\psi_{+}-i\epsilon_{+}\psi_{-},\\
  &\delta \psi_{+}=-2\epsilon_{-}\del_{+}\phi-\epsilon_{+} f,\\
  &\delta \psi_{-}=2\epsilon_{+}\del_{-}\phi-\epsilon_{-} f,\\
  &\delta f = 2i\epsilon_{-}\del_{+}\psi_{-}+2i\epsilon_{+}\del_{-}\psi_{+}.
\end{aligned}
\end{equation}
Let us call $\Phi$ a ``superfield'' if it is transformed as \eqref{susytransf} by the supersymmetry transformation.  
If $\Phi$ is a superfield, the derivatives $\Dd_{\pm} \Phi$ are also superfields, since $\Qd_{\pm}$ and $\Dd_{\pm}$ anti-commute to each other, in the same way as usual supersymmetry.  
On the other hand, if $\Phi_1$ and $\Phi_2$ are superfields, the product $\Phi_3:=\Phi_1\Phi_2$ \emph{is not a superfield.}  
In other words,  $\Phi_3$ does not follow the transformation law \eqref{susytransf} due to the third-order derivatives in $\Qd_{\pm}$.  
This property is quite different from usual supersymmetry and the main obstacle in our formulation.

Before constructing the action, let us look at the supersymmetry algebra.  Suppose we have a theory invariant under the transformation \eqref{susytransf}.  This theory includes supersymmetry generators $Q_{\pm}$ that satisfy the relation
\begin{align}
  \delta \Phi = [\epsilon_{-}Q_{+}-\epsilon_{+}Q_{-},\Phi].
\end{align}
Then, we find anti-commutation relations of $Q_{\pm}$ given by\footnote{The c-number ambiguity in the first equation is absorbed into the c-number shift of $H$ and $P_{xyz}$.}
\begin{align}
  \{Q_{\pm},Q_{\pm}\}=2H\pm 2\alpha \Pxyz, \qquad \{Q_{+},Q_{-}\}=Z, \qquad [Z,(\text{all operators})]=0. \label{SUSYalgebra}
\end{align}
Here $H$ is the Hamiltonian, and $\Pxyz$ is the generator of the transformation
\begin{align}
  \delta_{xyz}\Phi :=\epsilon \del_{x}\del_{y}\del_{z}\Phi = [i\epsilon \Pxyz,\Phi],\label{Pxyz}
\end{align}
where $\epsilon$ is a infinitesimal bosonic parameter.
Notice that $\Pxyz$ is different from the product of momenta $P_{x}P_{y}P_{z}$.  We conclude that this supersymmetry does not exist unless the theory is invariant under the transformation \eqref{Pxyz}.

\subsection{The action of the supersymmetric \texorpdfstring{$\varphi$}{phi} theory}

The difficulty in our formulation is that, unlike the ordinary supersymmetry, the product of superfields does not become a superfield.  In spite of this difficulty, the following theorem allows us to write the action of a free field theory.
\begin{theorem}\label{thm}
  Let $\Phi_1, \Phi_2$ be superfields which are transformed by the transformation law\eqref{susytransf}.  Then
  \begin{align}
    \delta \int d^2\theta \Phi_1 \Phi_2 =\int d^2 \theta [(\delta \Phi_1) \Phi_2 + \Phi_1(\delta \Phi_2)]\label{quadratictotalderivative}
  \end{align}
  is a total derivative.  Here integral $\int d^2\theta$ is defined by
  \begin{align}
    \int d^2\theta :=\int d\theta^{+} \int d\theta^{-}.
  \end{align}
\end{theorem}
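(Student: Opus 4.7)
The plan is to reduce the statement to a single algebraic identity: that the Leibniz-rule failure of the third-order spatial operator $\del_x\del_y\del_z$ is itself a total spatial derivative.

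First, I would split $\Qd_{\pm}=-i\pdv{\theta^{\pm}}+2\theta^{\pm}\del_{\pm}$ and check which pieces obey Leibniz. The Grassmann derivative $\pdv{\theta^{\pm}}$ is a graded derivation, so on Grassmann-even superfields $(\pdv{\theta^{\pm}}\Phi_1)\Phi_2+\Phi_1(\pdv{\theta^{\pm}}\Phi_2)=\pdv{\theta^{\pm}}(\Phi_1\Phi_2)$. The $\del_t$ piece of $\del_{\pm}$ is first-order and also obeys Leibniz. Hence the only possible Leibniz obstruction sits in the $\alpha\del_x\del_y\del_z$ piece inside $\del_{\pm}$.

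Second, I would prove the identity
\begin{equation*}
\del_x\del_y\del_z(fg)-(\del_x\del_y\del_z f)g-f(\del_x\del_y\del_z g)=\tfrac{1}{2}\sum_{\text{cyc}(i,j,k)}\del_i\bigl[(\del_j f)(\del_k g)+(\del_k f)(\del_j g)\bigr],
\end{equation*}
by expanding the LHS with three successive Leibniz steps to reveal the six mixed monomials, and checking that each monomial appears exactly twice on the RHS when one sums over the three cyclic triples. Thus the Leibniz deficit of $\del_x\del_y\del_z$ is a total spatial derivative.

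Combining steps one and two, I obtain
\begin{equation*}
(\Qd_{\pm}\Phi_1)\Phi_2+\Phi_1(\Qd_{\pm}\Phi_2)=\Qd_{\pm}(\Phi_1\Phi_2)\mp\alpha\,\theta^{\pm}\,\Delta(\Phi_1,\Phi_2),
\end{equation*}
where $\Delta(\Phi_1,\Phi_2)$ is the deficit from step two, and hence a total spatial derivative. Then $\int d^2\theta$ of the first term is a total spacetime derivative, because a direct $\theta$-expansion yields $\int d^2\theta\,\Qd_{\pm}F=2\del_{\pm}(F|_{\theta^{\mp}})$ for any Grassmann function $F$. For the second term, $\int d^2\theta\,\theta^{\pm}\Delta$ projects onto the $\theta^{\mp}$-component of $\Delta$, and because $\del_x,\del_y,\del_z$ commute with the $\theta$-expansion, this component is again a total spatial derivative. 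Assembling the two $\Qd_{\pm}$ contributions weighted by $i\epsilon_{\mp}$ completes the argument; the Grassmann-odd superfield case is analogous up to the sign flip from the graded Leibniz rule.

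The main obstacle is the combinatorial identity in step two; once it is verified, the rest reduces to routine bookkeeping with Grassmann integration and graded Leibniz rules.
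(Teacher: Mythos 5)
Your proof is correct, and it reaches the conclusion by a somewhat different organization than the paper. The paper's proof rests on a single observation: the \emph{symmetrized} combination $(\del_x\del_y\del_z\Phi_1)\Phi_2+\Phi_1(\del_x\del_y\del_z\Phi_2)$ is itself a total spatial derivative, exhibited explicitly as $\del_{x}\bigl((\del_y\del_z \Phi_1)\Phi_2\bigr)-\del_y\bigl((\del_z\Phi_1)(\del_{x}\Phi_2)\bigr)+\del_z\bigl(\Phi_1(\del_x\del_y\Phi_2)\bigr)$; consequently $(\susyder\Phi_1)\Phi_2+\Phi_1(\susyder\Phi_2)$ is a total derivative already in superspace, and its $d^2\theta$ integral is then immediately a total spacetime derivative. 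You instead quantify the Leibniz \emph{deficit} $\Delta$ of $\del_x\del_y\del_z$, prove via the cyclic identity that $\Delta$ is a total derivative, and compensate with $\Qd_{\pm}(\Phi_1\Phi_2)$, which obliges you to argue separately that $\int d^2\theta\,\Qd_{\pm}F$ is a total spacetime derivative and to track the $\theta^{\mp}$-component of $\Delta$. The two key identities are equivalent: since $\del_x\del_y\del_z(\Phi_1\Phi_2)=\del_x\bigl(\del_y\del_z(\Phi_1\Phi_2)\bigr)$ is trivially a total derivative, your deficit identity plus this observation reproduces the paper's identity, and using it in that form would let you skip the $\int d^2\theta\,\Qd_{\pm}F$ bookkeeping entirely. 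What your route buys is an explicit, manifestly symmetric formula for the obstruction to the Leibniz rule, which makes transparent exactly why products of superfields fail to be superfields; what the paper's route buys is brevity, since the whole variation becomes a total superspace derivative in one line. Your treatment of the Grassmann-odd case is only sketched, but the signs do work out (after moving $\epsilon_{\mp}$ and $\theta^{\pm}$ past an odd $\Phi_1$, the same symmetric spatial combination reappears), and handling that case is genuinely needed, since the theorem is later applied to products of fermionic superfields in the duality Lagrangian.
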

\begin{proof}
  Because of an identity
  \begin{align}
    (\del_x \del_y \del_z \Phi_1) \Phi_2 + \Phi_1 (\del_x \del_y \del_z \Phi_2) &= \del_{x}((\del_y\del_z \Phi_1) \Phi_2)-\del_y((\del_z\Phi_1)(\del_{x}\Phi_1))+\del_z(\Phi_1(\del_x\del_y\Phi_2))\nonumber\\
    &=(\text{total derivative}),\label{integrationbyparts}
  \end{align}
  the differential operator $\susyder$ in \eqref{susytransf} satisfies the relation
  \begin{align}
    (\susyder \Phi_1) \Phi_2 +\Phi_1(\susyder \Phi_2)=
    (\text{total derivative in superspace}).
  \end{align}
  Therefore the left-hand side of \eqref{quadratictotalderivative} becomes
  \begin{equation}
    \begin{aligned}
      \delta \int d^2 \theta \Phi_1 \Phi_2 
      &=\int d^2 \theta [(\delta \Phi_1) \Phi_2 + \Phi_1(\delta \Phi_2)]\\
      &=\int d^2 \theta [(\susyder \Phi_1) \Phi_2 + \Phi_1(\susyder \Phi_2)]\\  
      &=\int d^2 \theta (\text{total derivative in superspace})\\
      &=(\text{total derivative in spacetime}).
    \end{aligned}    
  \end{equation}    
\end{proof}

Let us write down the action of the supersymmetric $\varphi$ theory with the help of theorem \ref{thm}.
The field in this theory is a real bosonic superfield $\Phi$ with periodicity  $\Phi\sim \Phi+2\pi$.  $\Phi$ is
expressed by components as
\begin{align}
  \Phi
  =\phi
  +i\theta^{+}\psi_{+}
  +i\theta^{-}\psi_{-}
  +i\theta^{+}\theta^{-}f.
\end{align}
The periodicity of $\Phi$ implies the periodicity of the component $\phi$ as $\phi\sim \phi + 2\pi$.  The Lagrangian density is written as
\begin{align}
  \Lcal = \int d^2\theta
    \frac{\mu_0}{2} \Dd_{-}\Phi \Dd_{+} \Phi,
\end{align}
where $\mu_0$ is a real parameter.  The supersymmetry transformation of this Lagrangian density becomes a total derivative according to theorem \ref{thm}. Therefore this theory has the supersymmetry.  The Lagrangian density is written by components as
\begin{align}
  \Lcal= \frac{\mu_0}{2}[
    4\del_{+}\phi \del_{-}\phi+2i\psi_{+}\del_{-}\psi_{+}+2i\psi_{-}\del_{+}\psi_{-} + f^2].\label{susyphitheory}
\end{align}
One can integrate out the auxiliary field $f$.  Then, the theory is the sum of the $\varphi$ theory \eqref{phitheory} and the $\psi$ theory \eqref{psitheory}.

In this paper, we only consider free field theories.  It seems difficult to construct an interacting theory with the supersymmetry considered in this paper.  
One can imagine that it is not easy to construct a theory with symmetry \eqref{Pxyz}, which is necessary to have the supersymmetry in this paper.  
It is an interesting future problem to construct such an interacting theory.

Here let us make some comments on the supersymmetrization of the $\phi$ theory in $2+1$ dimensions \cite{Seiberg:2020bhn}.  This theory is also similar to the free scalar theory in $1+1$ dimensions.  However, the relevant formula for the integration by parts is
\begin{align}
  (\del_{x}\del_{y}\Phi_1)\Phi_2-\Phi_1(\del_{x}\del_{y}\Phi_2)=
  (\text{total derivative}),
\end{align}
instead of \eqref{integrationbyparts}.  Therefore the procedure cannot be parallel to $1+1$ dimensions, and one have to make some trick to supersymmetrize the $\phi$ theory in $2+1$ dimensions.  This construction of the supersymmetric $\phi$ theory in $2+1$ dimensions is also an interesting future issue.

\subsection{Quadrupole symmetry current and the ground state degeneracy}
In our supersymmetric $\varphi$ theory, we have momentum and winding quadrupole symmetry currents as superfields.  Let us consider superfields
\begin{align}
  \Jq_{\pm}=\Dd_{\pm}\Phi=\psi_{\pm}+\theta^{\pm}J_{\pm},\qquad
  J_{\pm}=-2\del_{\pm}\phi,
\end{align}
where we use equations of motion in the expression by components.  Since the equations of motion can be expressed as $\Dd_{+}\Dd_{-}\Phi=0$, these currents satisfy the conservation law
\begin{align}
  \Dd_{\mp}\Jq_{\pm}=0 \quad
  \Leftrightarrow \quad \del_{\mp}\psi_{\pm}=0,\ \del_{\mp}J_{\pm}=0.
\end{align}
By the same argument as the $\varphi$-theory \cite{Gorantla:2020xap}, we find that
\begin{align}
  \oint dx \psi_{\pm},\qquad
  \oint dy \psi_{\pm},\qquad
  \oint dz \psi_{\pm},\qquad
  \oint dx J_{\pm},\qquad
  \oint dy J_{\pm},\qquad
  \oint dz J_{\pm},
\end{align}
are all conserved.  Actually, the bosonic charges of this theory are nothing but the momentum and winding quadrupole charges of the $\varphi$ theory \cite{Gorantla:2020xap}. The momentum and winding quadrupole charge density in \cite{Gorantla:2020xap} are expressed, respectively, as
\begin{align}
  \mu_{0}\del_{t}\phi = -\frac{\mu_0}{2}(J_{+}+J_{-}),\qquad
  \frac{1}{2\pi}\del_{x}\del_{y}\del_{z}\phi=\frac{1}{4\pi\alpha}(J_{-}-J_{+}).
\end{align}

Besides these bosonic charges, we also have the same number of fermionic quadrupole charges in our theory.  Let $A$ be the number of bosonic charges.  By the argument of \cite{Gorantla:2020xap}, $A$ is proportional to the area if we regularize the theory by the lattice.
Let us assume that we have a regularization that preserves the supersymmetry.  Then we have the same number $A$ of fermionic charges.  Let $\gamma_i,\ (i=1,\dots,A)$ be these fermionic charges.  We can choose the basis so that the canonical anti-commutation relations become
\begin{align}
  \{\gamma_i,\gamma_j\}=2\delta_{ij}.
\end{align}
This is nothing but the Clifford algebra.  Since these $\gamma_i$'s commute with the Hamiltonian, the space of ground states must be a representation space of this Clifford algebra. Therefore the ground state degeneracy is $2^{[A/2]}$.  We conclude that the residual entropy $[A/2]\log 2$ shows a sub-extensive behavior; it is proportional to the area instead of the volume.

In appendix \ref{app:lattice}, we consider a fermionic lattice model that may be a regularization of the $\psi$ theory.  We find that the number of fermionic charges and the residual entropy are proportional to the area.

\subsection{Duality}

Here we will derive the supersymmetric version of self-duality given in \cite{Gorantla:2020xap}. As a warm up, we rederive the self-duality of the $\varphi$ theory in our notation.  We start from the Lagrangian density
\begin{align}
  \Lcal = 2\mu_0 b_{-}b_{+}-2\bt_{-}(\del_{+}\phi-b_{+})+2(\del_{-}\phi-b_{-})\bt_{+}, \label{baseLagrangian}
\end{align}
where $\phi, b_{\pm},\bt_{\pm}$ are real bosonic fields. $\phi$ has periodicity $\phi\sim \phi + 2\pi$.

If we integrate out $\bt_{\pm}$ in \eqref{baseLagrangian}, we obtain constraints $\bt_{\pm}=\del_{\pm}\phi$. Then we integrate out $b_{\pm}$ and obtain the $\varphi$ theory \eqref{phitheory}.

On the other hand, if we first integrate out $b_{\pm}$, we obtain
\begin{align}
  \Lcal=\frac{2}{\mu_{0}}\bt_{-}\bt_{+}-2 \bt_{-}\del_{+}\phi + 2 \del_{-}\phi \bt_{+}.
\end{align}
Then we integrate out $\phi$ and obtain the constraint
\begin{align}
  \del_{-}\bt_{+}-\del_{+}\bt_{-}=0, \label{btconstraint}
\end{align}
as a necessary condition.
This constraint can be solved by introducing a real bosonic field $\phit$ as
\begin{align}
  \bt_{\pm}=c\del_{\pm}\phit, \label{tempsolbt}
\end{align}
where $c$ is a constant that is determined so that $\phit$ is $2\pi$ periodic.

Let us determine $c$. Besides the constraint \eqref{btconstraint}, we obtain some additional constraints from the $2\pi$ periodicity of $\phi$. Consider the Fourier modes of the fields
\begin{align}
  \phi(t,\xv)=\sum_{\kv}\phi_{\kv}(t)e^{i\kv\cdot \xv},\qquad
  \bt_{\pm}(t,\xv)=\sum_{\kv}\bt_{\pm,\kv}(t)e^{i\kv\cdot \xv}.
\end{align}
The $2\pi$ periodicity of $\phi$ implies only the periodicity of the zero mode $\phi_{\zv}\sim \phi_{\zv}+2\pi$.  Let us focus on this zero mode.  The part of the action including this $\phi_{\zv}$ is 
\begin{align}
  S=\int dt V(\bt_{+,\zv}-\bt_{-,\zv})\del_{t}\phi_{\zv}+\cdots,
\end{align}
where $V$ is the volume of the space $V=\int d^3\xv 1$.
The canonically conjugate momentum of $\phi_{\zv}$ is given by
\begin{align}
  p_{\zv}=V(\bt_{+,\zv}-\bt_{-,\zv}). \label{conjugate-momentum}
\end{align}
Because of the periodicity of $\phi_{\zv}$, the eigenvalues of $p_{\zv}$ must be integers. From the relation \eqref{tempsolbt}, we obtain
\begin{align}
  \bt_{+}-\bt_{-}=c\alpha \del_{x}\del_{y}\del_{z}\phit.
\end{align}
Let us integrate both sides of this equation in the space $\int d^3\xv$. The right-hand side becomes the conjugate momentum \eqref{conjugate-momentum}.  The left-hand side is expressed in terms of the total winding number $\wt$
\begin{align}
  \wt=\frac{1}{2\pi}\int d^3\xv \del_{x}\del_{y}\del_{z}\phit.
\end{align}
This $\wt$ must be an integer due to the $2\pi$ periodicity of $\phit$.
Therefore we obtain the relation
\begin{align}
  p_{\zv}=2\pi c \alpha \wt.
\end{align}
Since both $p_{\zv}$ and $\wt$ can take all integer values, we can determine the constant $c$ as
\begin{align}
  c=\frac{1}{2\pi \alpha}.
\end{align}
The expression \eqref{tempsolbt} become
\begin{align}
  \bt_{\pm}=\frac{1}{2\pi \alpha} \del_{\pm}\phit.
\end{align}
Finally we obtain the dual Lagrangian
\begin{align}
  \Lcal = 2\mut_{0}\del_{-}\phit \del_{+} \phit, \qquad \mut_{0}=\frac{1}{(2\pi \alpha)^2\mu_0}.
\end{align}
This is just the same result as \cite{Gorantla:2020xap}. Notice that $\mu$ in \cite{Gorantla:2020xap} is expressed as $\mu=\frac{1}{\mu_0 \alpha^2}$.

Next let us turn to the duality of supersymmetric $\varphi$ theory.  Starting from the action
\begin{align}
  S= \int dt \int d^3\xv  \int d^2\theta \Ls,\qquad
  \Ls= \frac{\mu_0}{2}B_{-}B_{+} + \frac12 \Bt_{-} (\Dd_{+}\Phi-B_{+})-\frac12 (\Dd_{-}\Phi-B_{-})\Bt_{+},\label{baseSUSYLagrangian}
\end{align}
where $B_{\pm},\Bt_{\pm}$ are real fermionic superfields and $\Phi$ is a real bosonic superfield with the periodicity $\Phi\sim \Phi+2\pi$.

If we first integrate out $\Bt_{\pm}$, we obtain the constraint $B_{\pm}=\Dd_{\pm}\Phi$. Then we integrate out $B_{\pm}$ and obtain the supersymmetric $\varphi$-theory \eqref{susyphitheory}.

On the other hand, if we first integrate out $B_{\pm}$, we obtain
\begin{align}
  \Ls= \frac{1}{2\mu_0}\Bt_{-}\Bt_{+} + \frac12 \Bt_{-}\Dd_{+}\Phi-\frac12 \Dd_{-}\Phi\Bt_{+}.
\end{align}
Then we integrate out $\Phi$ and obtain the constraint
\begin{align}
  \Dd_{-}\Bt_{+}+\Dd_{+}\Bt_{-}=0,
\end{align}
which is solved in terms of a bosonic superfield $\Phit$ as
\begin{align}
  \Bt_{\pm}=\frac{1}{2\pi\alpha}\Dd_{\pm}\Phit.
\end{align}
The coefficient is determined so that $\Phit$ has the periodicity $\Phit\sim \Phit+2\pi$.  Finally we obtain the dual theory as
\begin{align}
  \Ls = \frac{\mut_0}{2}\Dd_{-}\Phit \Dd_{+}\Phit,\qquad \mut_{0}=\frac{1}{(2\pi \alpha)^2\mu_0}.
\end{align}

Notice that we only use the quadratic actions during this procedure, and therefore the supersymmetry is manifest.

\section{Supersymmetric tensor gauge theory}
\label{sec:gaugetheory}

In this section, we supersymmetrize the tensor gauge theory.  We write down the supersymmetric tensor gauge theory action.  We also discuss BPS Wilson line defects and fractons.

\subsection{Gauge superfields}
Let us consider gauging the shift symmetry of $\Phi$ in the supersymmetric $\varphi$ theory\eqref{susyphitheory}.  The parameter of the shift symmetry is promoted to a real bosonic superfield $K$ with periodicity $K\sim K+2\pi$.  The gauge transformation of $\Phi$ is given by
\begin{align}
  \Phi \to \Phi'=\Phi + K.
\end{align}
Let us introduce real fermionic superfields $\Gamma_{\pm}$ and the super covariant derivatives $\nabla_{\pm}$ by
\begin{align}
  \nabla_{\pm}\Phi:=\Dd_{\pm}\Phi - \Gamma_{\pm}.
\end{align}
The gauge transformation law of $\Gamma_{\pm}$ are determined so that $\nabla_{\pm}\Phi$ are gauge invariant, and given by  
\begin{align}
  \Gamma_{\pm}\to \Gamma'_{\pm}:=\Gamma_{\pm}+\Dd_{\pm}K. \label{gaugetransf}
\end{align} 

Let us look at the components of $\Gamma_{\pm}$ and their gauge transformation.  First, let us denote the components of $K$ as
\begin{align}
  K=\omega + i\theta^{+}\eta_{+}+i\theta^{-}\eta_{-}+i\theta^{+}\theta^{-}\tau,
\end{align}
where $\omega, \tau$ are real bosonic fields and $\eta_{\pm}$ are real fermionic fields.  We also denote the components of $\Gamma_{\pm}$ as
\begin{equation}
  \begin{aligned}
    &\Gamma_{+}=\chi_{+}-2\theta^{+}A_{+}+\theta^{-}(B+\sigma)-2i\theta^{+}\theta^{-}(\lambda_{+}+\del_{+}\chi_{-}),\\
    &\Gamma_{-}=\chi_{-}-\theta^{+}(B-\sigma)-2\theta^{-}A_{-}+2i\theta^{+}\theta^{-}(\lambda_{-}+\del_{-}\chi_{+}),
  \end{aligned}
  \label{gammacomponents}  
\end{equation}
where $B,\sigma, A_{\pm}$ are real bosonic fields, and $\chi_{\pm},\lambda_{\pm}$ are real fermionic fields.  Notice that these are the most generic form of two real fermionic superfields.  The gauge transformations \eqref{gaugetransf} of the components read
\begin{align}
  \chi'_{\pm}=\chi_{\pm}+\eta_{\pm},\qquad B'=B+\tau,\qquad A'_{\pm}=A_{\pm}+\del_{\pm}\omega,\qquad \sigma'=\sigma,\qquad \lambda'_{\pm}=\lambda_{\pm}.
\end{align}

We can construct the gauge invariant superfield $\Sigma$ from $\Gamma_{\pm}$ as
\begin{align}
  \Sigma = \frac{i}{2}(\Dd_{+}\Gamma_{-}+\Dd_{-}\Gamma_{+}).
\end{align}
In order to find the component expression of $\Sigma$, it is convenient to choose Wess-Zumino (WZ) gauge 
\begin{align}
  \chi_{\pm}=0,\ B=0. \label{WZgauge}
\end{align}
In this gauge, $\Gamma_{\pm}$ are expressed as
\begin{align}
  &\Gamma_{+}=-2\theta^{+}A_{+}+\theta^{-}\sigma-2i\theta^{+}\theta^{-}\lambda_{+},\qquad
  \Gamma_{-}=\theta^{+}\sigma-2\theta^{-}A_{-}+2i\theta^{+}\theta^{-}\lambda_{-}.  
\end{align}
We obtain the expression of $\Sigma$ in terms of components as
\begin{align}
  \Sigma = \sigma + i\theta^{+}\lambda_{+}+i\theta^{-}\lambda_{-}+i\theta^{+}\theta^{-}2F_{+-},\qquad
  F_{+-}:=\del_{+}A_{-}-\del_{-}A_{+}.
\end{align}

Here let us mention the relation to the tensor gauge field in \cite{You:2019cvs,You:2019bvu,Gorantla:2020xap}.  The gauge fields $A_{\pm}$ and the field strength $F_{+-}$ in this paper are related to the gauge fields $B_0,B_{xyz}$ and the field strength $E_{xyz}$ in \cite{Gorantla:2020xap} as
\begin{align}
  A_{\pm}=\frac12 ( B_{0}\pm \alpha B_{xyz}), \qquad
  F_{+-}=-\frac{\alpha}{2}(\del_{t}B_{xyz}-\del_{x}\del_{y}\del_{z}B_{0})
  =-\frac{\alpha}{2}E_{xyz}.
\end{align}
Thus the superfield $\Gamma_{\pm}$ are supersymmetry completions of the tensor gauge fields $B_0, B_{xyz}$.

\subsection{Gauge theory action}

Here we construct the supersymmetric gauge theory action.
The gauge field part is written by the gauge invariant superfield $\Sigma$ and consists of the kinetic term and the potential term.  The potential term must be quadratic polynomial or linear function in order to preserve the supersymmetry in our formalism.

The kinetic term in the Lagrangian density is written as
\begin{align}
  \Lcal_{\text{kinetic}}=\int d^2\theta \frac{1}{g_e^2\alpha^2}\Dd_{-}\Sigma \Dd_{+}\Sigma=\frac{1}{g_e^2\alpha^2}
  (4\del_{+}\sigma\del_{-}\sigma+2i\lambda_{+}\del_{-}\lambda_{+}+2i\lambda_{-}\del_{+}\lambda_{-}+(2F_{+-})^2),
\end{align}
where $g_e$ is a real coupling constant.
We find that this theory includes a boson $\sigma$ with the same kinetic term as the $\varphi$ theory\eqref{phitheory}, and $\lambda_{\pm}$ that are a copy of the $\psi$ theory \eqref{psitheory} in addition to the gauge theory kinetic term.

The quadratic term is written as
\begin{align}
  \Lcal_{2}=\int d^2\theta im\Sigma^2
  =4m\sigma F_{+-}-2im\lambda_{+}\lambda_{-},
\end{align}
where $m$ is a real constant.  This term includes axion-like interaction between $\sigma$ and $A_{\pm}$, as well as a fermion mass term.

The linear term is the theta term
\begin{align}
  \Lcal_{1}=\int d^2 \theta \left(-\frac{i\vartheta}{2\pi\alpha}\Sigma\right)=-\frac{\vartheta}{2\pi \alpha}2F_{+-},
\end{align}
where $\vartheta$ is a real constant with the periodicity $\vartheta\sim \vartheta+2\pi$.

We can add the supersymmetric $\varphi$ theory as a matter to this gauge theory.
The matter part of the action is obtained by replacing the derivatives $\Dd_{\pm}$ with covariant the derivatives $\nabla_{\pm}$ as
\begin{align}
  \Lcal_{\text{matter}} &= \int d^2\theta \frac{\mu_0}{2}\nabla_{-}\Phi \nabla_{+}\Phi.
\end{align}
One can expand this action by components and obtain the fully gauge invariant action by components.

Instead of writing down the gauge unfixed action in components, we choose WZ gauge \eqref{WZgauge} and write down the action in components.
In WZ gauge the matter part of the action is given by
\begin{align}
  \Lcal_{\text{matter}} 
  =\frac{\mu_0}{2}[4D_{+}\phi D_{-}\phi + 2i \psi_{+}(\del_{-}\psi_{+}-\lambda_{-})
  +2i\psi_{-}(\del_{+}\psi_{-}-\lambda_{+})+f^2-\sigma^2],
  \quad D_{\pm}\phi:=\del_{\pm}\phi -A_{\pm}. \label{matteractionWZ}
\end{align}
In this action, $\lambda_{\pm}$ couple to the fermionic quadrupole currents $\psi_{\mp}$, and therefore one can interpret that $\lambda_{\pm}$ are fermionic analogs of the tensor gauge fields.

Finally, let us make some comments on the supersymmetry of the action \eqref{matteractionWZ} in WZ gauge.
Since the supersymmetry transformation \eqref{susytransf} breaks WZ gauge condition \eqref{WZgauge}, the matter action \eqref{matteractionWZ} in WZ gauge is not invariant under the transformation \eqref{susytransf}.
Therefore we should consider the supersymmetry transformation \eqref{susytransf} combined with the gauge transformation $\tilde{\delta}$ with the parameter
\begin{align}
  K
  =
  i\theta^{+}(\epsilon_{+}\sigma+2\epsilon_{-}A_{+})
  +i\theta^{-}(-\epsilon_{-}\sigma-2\epsilon_{+}A_{-})
  +i\theta^{+}\theta^{-}(-i\epsilon_{-}\lambda_{+}-i\epsilon_{+}\lambda_{-}).
\end{align}
We denote this combined transformation by $\delta'=\delta + \tilde{\delta}$.  This improved supersymmetry transformation $\delta'$ keeps WZ condition and makes the action \eqref{matteractionWZ} invariant.  This $\delta'$ transformation of components is given by
\begin{equation}
  \begin{aligned}
    &\delta' A_{+}=-i\epsilon_{+}\lambda_{+},\qquad
    &&\delta' A_{-}=i\epsilon_{-}\lambda_{-},\\
    &\delta' \lambda_{+}=-2\epsilon_{-}\del_{+}\sigma-2\epsilon_{+}F_{+-},\qquad
    &&\delta' \lambda_{-}=2\epsilon_{+}\del_{-}\sigma-2\epsilon_{-}F_{+-},\\
    &\delta' \sigma =i\epsilon_{-}\lambda_{+}-i\epsilon_{+}\lambda_{-}.&&\\
    &\delta'\phi = i\epsilon_{-}\psi_{+}-i\epsilon_{+}\psi_{-},&&\\
    &\delta'\psi_{+} = -2\epsilon_{-}D_{+}\phi-\epsilon_{+}(f-\sigma),\qquad
    &&\delta'\psi_{-} = 2\epsilon_{+}D_{-}\phi-\epsilon_{-}(f+\sigma),\\
    &\delta' f=
    i\epsilon_{-}(2\del_{+}\psi_{-}- \lambda_{+})
    +i\epsilon_{+}(2\del_{-}\psi_{+}- \lambda_{-}),
    &&D_{\pm}\phi:=\del_{\pm}\phi -A_{\pm}.
  \end{aligned}
\end{equation}
One can explicitly check that $\delta'$ transformation of $\Lcal_{\text{matter}}$ is a total derivative.

\subsection{BPS defects as fractons}
We have a BPS defects in the supersymmetric tensor gauge theory constructed above.  Let us consider the Wilson line
\begin{align}
  \exp\left(i\int dt (A_{+}+A_{-}+\sigma)\right).
\end{align}
This Wilson line is one in \cite{Gorantla:2020xap} dressed by $\sigma$.
This Wilson line is invariant under the gauge transformation \eqref{gaugetransf}.  It is also invariant under a half of the supersymmetry transformation with parameters $\epsilon_{-}=\epsilon_{+}$, and therefore 
we call it as a BPS Wilson line.  This Wilson line cannot move because of the gauge symmetry.  Therefore this Wilson line describes the probe limit of a fracton.

In \cite{Gorantla:2020xap}, it has been shown that once four of these defects form a quadrupole, it can move collectively.  
This is also true in our defect in our supersymmetric theory.  However, all the supersymmetry is broken if the quadrupole moves even if the velocity is constant.  
It is an interesting future problem to find a nice dressing to make the moving quadrupole BPS, or some no-go theorem.

\subsection*{Acknowledgement}
This work was supported in part by JSPS KAKENHI Grant Number 15K05054.

\appendix

\section{Lattice model}
\label{app:lattice}
Here we would like to consider a naive lattice regularization of the $\psi$ theory.
We consider 3-dimensional cubic lattice. Each site is labeled by $\nv=(n_x,n_y,n_z) \in \Zb_{L_x}\times\Zb_{L_y}\times\Zb_{L_z}$.  We impose periodic boundary conditions $n_i\sim n_i+L_i,\ (i=x,y,z)$ and assume $L_i$ are even integers for simplicity.  We assign a hermitian fermionic operator $c_{\nv}$ to each site.  They satisfy anti-commutation relation
\begin{align}
\{c_{\nv},c_{\nv'}\}=\delta_{\nv,\nv'}.
\end{align}

Let us consider the Hamiltonian with a real constant parameter $r$ written as
\begin{equation}
  \begin{aligned}
    H=&r\sum_{\nv}c_{\nv} i \Delta_{xyz}c_{\nv},\\
    \Delta_{xyz}c_{\nv}:=&\frac{1}{8}\sum_{\vec{m}\in \{+1,-1\}^3}
    m_x m_y m_z c_{\nv+\vec{m}}\\
    =&\frac18( 
      c_{(n_x+1,n_y+1,n_z+1)}
      -c_{(n_x-1,n_y+1,n_z+1)}
      -c_{(n_x+1,n_y-1,n_z+1)}
      +c_{(n_x-1,n_y-1,n_z+1)}\\
    & -c_{(n_x+1,n_y+1,n_z-1)}
      +c_{(n_x-1,n_y+1,n_z-1)}
      +c_{(n_x+1,n_y-1,n_z-1)}
      -c_{(n_x-1,n_y-1,n_z-1)}
    ).\label{latticemodel}
  \end{aligned}    
\end{equation}
One may naively expect that one can take the continuum limit so that $\Delta_{xyz}\to a^3 \del_x\del_y\del_z$ and the system is well-described by (half of ) the $\psi$ theory.  If we look more closely at this system, we find so-called ``doublers'' and this system seems to contain 4 copies of the $\psi$ theory.  This picture may be still too naive since short distance physics in $x$-direction affects low energy physics if the wavelength in $y$- or $z$-direction is very large.

This system \eqref{latticemodel} is solved by the Fourier transformation
\begin{equation}
  \begin{aligned}
    &c_{\nv}=\frac{1}{\sqrt{N}}\sum_{\kv}b_{\kv}e^{i\kv\cdot \nv},\quad N:=L_xL_yL_z,\\
    &\kv=(k_x,k_y,k_z),\quad k_i\in \frac{2\pi}{L_i}\Zb,\quad k_i\sim k_i+2\pi,\quad (i=x,y,z).
  \end{aligned}    
\end{equation}
$b_{\kv}$ satisfy anti-commutation relations
\begin{align}
  \{b_{\kv},b_{\kv'}\}=\delta_{\kv,-\kv'}.
\end{align}
The Hamiltonian \eqref{latticemodel} becomes
\begin{align}
  H=r\sum_{\kv} \sin k_x \sin k_y \sin k_z b_{-\kv}b_{\kv}.
\end{align}

Let us count the ground state degeneracy of the system \eqref{latticemodel}.  
The number of fermionic zero modes is the number of $\kv$ satisfying $\sin k_x \sin k_y \sin k_z=0$.  Thus it is 
\begin{align}
  2L_xL_y+2L_yL_z+2L_zL_x-4L_x-4L_y-4L_z+8,
\end{align}
and the ground state degeneracy is given by
\begin{align}
  2^{L_xL_y+L_yL_z+L_zL_x-2L_x-2L_y-2L_z+4}.
\end{align}
We conclude that the residual entropy is proportional to the area instead of the volume of the system.

\bibliographystyle{utphys}
\bibliography{ref}
\end{document}